\newtheorem{theorem}{Theorem}
\def\QED{~\rule[-1pt]{5pt}{5pt}\par\medskip}
\newenvironment{proof}{\noindent{\bf Proof: }}{\hspace*{\fill}\QED}
\def\hT[#1]{H_{#1}{^{(T)}}}
\def\hR[#1]{H_{#1}{^{(R)}}}
\def\hm1[#1]{H_{#1}{^{-1}}}
\def\bk[#1]{{^{(#1)}}}
\def\mbx{\mathbf{x}}
\def\mbc{\mathbf{c}}
\def\mby{\mathbf{y}}
\def\mbw{\mathbf{w}}
\def\mbz{\mathbf{z}}
\def\mbu{\mathbf{u}}
\def\mbv{\mathbf{v}}
\def\mbf{\mathbf{f}}
\def\mbt{\mathbf{t}}
\def\mbh{\mathbf{h}}
\def\mbz{\mathbf{z}}
\newcommand{\mbR}{\mathbb{R}}
\newcommand{\mcN}{\mathcal{N}}
\newcommand{\norm}[1]{\left\lVert #1 \right\rVert}
\begin{document}
\title{An Information-Theoretic Explanation for the Adversarial Fragility of AI Classifiers}

\author{%
  \IEEEauthorblockN{Hui Xie, Jirong Yi, Weiyu Xu, and Raghu Mudumbai}
  \IEEEauthorblockA{Department of Electrical and Computer Engineering, University of Iowa}
}

\maketitle


\begin{abstract}
We present a simple hypothesis about a compression property of artificial intelligence (AI) classifiers and present theoretical arguments to show that this hypothesis
successfully accounts for the observed fragility of AI classifiers to small adversarial perturbations. We also propose
a new method for detecting when small input perturbations cause classifier errors, and show theoretical guarantees
for the performance of this detection method. We present experimental results with a voice recognition system
to demonstrate this method. The ideas in this paper are motivated by a simple analogy between AI classifiers and
the standard Shannon model of a communication system. \footnote{The first two authors contributed equally.}
\end{abstract}

\section{Introduction}
Recent advances in machine learning have led to the invention of complex classification systems that are very successful in
detecting features in datasets such as images, hand-written texts, or audios. However, recent works have also discovered 
what appears to be a universal property of AI classifiers: vulnerability to small adversarial perturbations. Specifically, we know 
that it is possible to design ``adversarial attacks'' that manipulate the output of AI classifiers arbitrarily by making small 
carefully-chosen modifications to the input. Many such successful attacks only require imperceptibly small perturbations of 
the inputs, which makes these attacks almost undetectable. Thus AI classifiers exhibit two seemingly contradictory properties: 
(a) high classification accuracy even in very noisy conditions, and (b) high sensitivity to very small adversarial perturbations. 
In this paper, we will use the term ``adversarial fragility'' to refer to this property (b).

The importance of the adversarial fragility problem is widely recognized in the AI community and there now exists a vast and
growing literature studying this property, see e.g. \cite{akhtar_threat_2018} for a comprehensive survey. This work, however, has
not yet resulted in a consensus on two important questions: (a) a theoretical explanation for adversarial fragility, and (b)
a general and systematic defense against adversarial attacks. Instead, we currently have multiple competing theoretical
explanations, multiple defense strategies based on both theoretical and heuristic ideas and many methods for generating 
adversarial examples for AI classifiers. Theoretical hypotheses from the literature include (a) quasi-linearity/smoothness 
of the decision function in AI classifiers \cite{goodfellow_explaining_2014}, (b) high curvature of the decision boundary \cite{fawzi_robustness_2016} and (c) closeness of the classification boundary to the data sub-manifold \cite{tanay_boundary_2016}. Defenses against adversarial attacks have also evolved from early methods using gradient masking \cite{papernot_practical_2017}, 
to more sophisticated recent methods such as adversarial training where an AI system is specifically subjected to adversarial 
attacks as part of its training process \cite{tramer_ensemble_2017}, and defensive distillation \cite{papernot_distillation_2016}. 
These new defenses in turn motivate the development of more sophisticated attacks \cite{carlini_defensive_2016} in an 
ongoing arms race.

In this paper, we show that property ``adversarial fragility'' is an unavoidable consequence of a simple ``compression'' 
hypothesis about AI classifiers. This hypothesis is illustrated in Fig. \ref{Fig2}: we assume that the output of AI classifiers 
is a function of a highly compressed version of the input. More precisely, we assume that the output of AI classifiers is a 
function of an intermediate set of variables of much smaller dimension than the input. The intuition behind this hypothesis 
is as follows. AI classifiers typically take high-dimensional inputs e.g. image pixels, audio samples, and produce a discrete 
{\it label} as output. The input signals (a) contain a great deal of redundancy, and (b) depend on a large number of irrelevant 
variables that are unrelated to the output labels. Efficient classifiers, therefore, must remove a large amount of redundant 
and/or irrelevant information from the inputs before making a classification decision. Indeed, a classifier that {\it generalizes} 
well, must, by definition, be insensitive to as many  non-essential input features as possible. We show in this paper that 
adversarial fragility is an immediate and necessary consequence of this ``compression'' property.

Certain types of AI systems can be shown to satisfy the compression property simply as a consequence of their structure.
For instance, AI classifiers for the MNIST dataset \cite{lecun_gradient-based_1998} typically feature a final layer 
in the neural network architecture that consists of softmax over a ${10 \times 1}$ real-numbered vector corresponding to the $10$ different 
label values; this amounts to a substantial dimension reduction from the $28 \times 28 = 784$ dimensional pixel vector 
at the inputs. More generally, there is some empirical evidence showing that AI classifiers actively compress their inputs during 
their training process \cite{shwartz-ziv_opening_2017}.

Our proposed explanation of adversarial fragility also immediately leads to an obvious and very powerful 
defense: if we enhance a classifier with a generative model that at least partially ``decompresses'' the classifier's output,
and compare it with the raw input signal, it becomes easy to check when adversarial attacks produce classifier outputs that are
inconsistent with their inputs. While we present some 
simple experimental results to validate our theory, our focus here is on the theoretical ideas; the important and challenging problem 
of designing good generative models to implement the proposed defense 
for general AI classification systems is deferred to future work. Interestingly, while our theory is novel, other researchers 
have recently developed defenses for AI classifiers against adversarial attacks that are consistent with our proposed approach
\cite{kundu_bihmp-gan:_2018, frosst_darccc:_2018}. 



\section{Problem Statement}
\begin{figure}[htb]
	\begin{center}
		\includegraphics[scale=0.24]{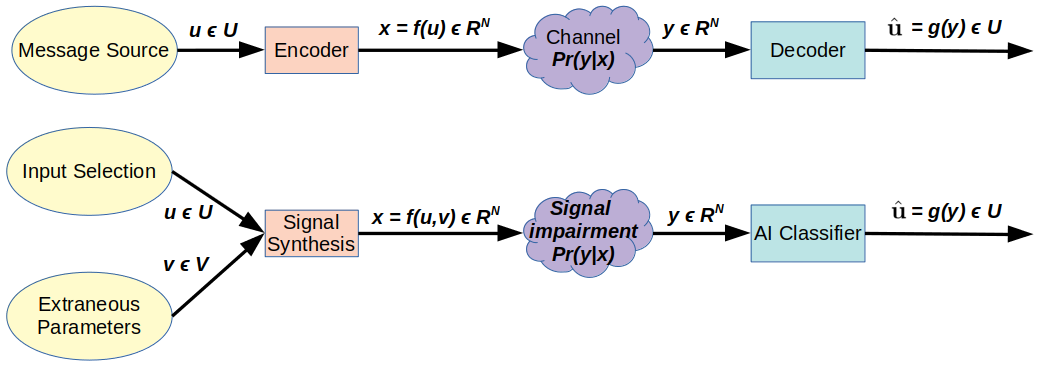}
	\end{center}
	\caption{Top: standard abstract model of a communication system; Bottom: abstract model of an AI classifier system.}
	\label{Fig1}
\end{figure}

An AI classifier can be defined as a system that takes a high-dimensional vector as input and maps it to a discrete set of labels.
As an example, a voice-recognition AI takes as input a {{time series}} containing the samples of an audio signal and outputs a string 
representing a sentence in English (or other spoken language). More concretely, consider Fig. \ref{Fig1} which explores a simple 
analogy between an AI classification system and a digital communication system.

The purpose of the AI system in Fig. \ref{Fig1} is to estimate the state of the world $\mathbf{u} \in \mathcal{X}$ where the set 
of all possible world states $\mathcal{X}$ is assumed to be finite and are enumerated as $\mbu_1,~\mbu_2 \dots,~ \mbu_{N_u}$, 
where $N_u$ is the size of $\mathcal{X}$. The input $\mathbf{y} \in \mathbb{R}^{N}$ to the AI classifier is a noisy version of signals $\mathbf{x} 
\in \mathbb{R}^N$, and $\mbx$ depend on $\mathbf{u}$ and on a number of {\it extraneous parameters} $\mathbf{v} \in \mathcal{V}$.
Note that the state $\mbu_i$ is uniquely determined by its index or ``label'' $i$. The output of the AI classifier is a state estimate 
$\hat{\mathbf{u}} \in \mathcal{X}$, or equivalently, its label.

The AI classifier in Fig. \ref{Fig1} is clearly analogous to a communication decoder: it looks at a set of noisy observations and 
attempts to decide which out of a set of possible input signals $\mathbf{x}$ was originally ``transmitted'' over the ``channel'', 
which in the AI system models all signal impairments such as distortion, random noise and hostile attackers. 

The ``Signal Synthesis'' block in the AI system maps input features into an observable signal $\mbx$. In the abstract model of Fig. 
\ref{Fig1}, the synthesis function $\mbf(\cdot)$ is deterministic with all random effects being absorbed into the ``channel'' without loss 
of generality. Note that while the encoder in the communication system is under the control of its designers, the 
signal synthesis in an AI system is determined by physical laws and is not in our control. However, the most important difference 
between communication and AI systems is the presence of the nuisance parameters $\mbv$. For instance, in a voice recognition 
system, the input features consist of the text being spoken ($\mbu$) and also a very large number of other characteristics ($\mbv$) 
of the speaker's voice such as pitch, accent, dialect, loudness, emotion etc. which together determine the mapping from a text to 
an audio signal. Thus there are a very large number of different ``codewords'' $\mbc_1 = \mbf(\mbu_1,\mbv_1),~\mbc_2 = 
\mbf(\mbu_1,\mbv_2), \dots$ that encode the same label $\mbu_1$. Let us define the ``codeword set'' for label ${{i}},~i=1 \dots N_u$:
\begin{align}
\mathcal{X}_i &\doteq \{ \mbc \in \mathbb{R}^N : \exists \mbv,~\mbc = \mbf(\mbu_i, \mbv) \} \label{eq:sdef1}
\end{align}
We assume that the codeword sets $\mathcal{X}_i$ satisfy:
\begin{align}
\min_{\forall i,j,~i \neq j} \min_{\mbc_i \in \mathcal{X}_i,~\mbc_j \in \mathcal{X}_j } \norm{\mbc_i - \mbc_j} \geq 2r_0 \label{eq:sep}
\end{align}
for some $r_0>0$, where $\|\cdot\|$ represents $\ell_{2}$ norm. In other words, all valid codewords corresponding to different labels 
$i \neq j$ are separated by at least a distance $2r_0$. In the voice recognition example, under this assumption audio signals corresponding to 
two different sentences must sound different. This guarantees the existence of the ideal classifier defined as the function 
$q^*(\mbx): \mathbb{R}^N \rightarrow \mathcal{X}$ that satisfies $q^* \left( \mbf(\mbu_i, \mbv) \right) = \mbu_i,~\forall i,~\mbv 
\in \mathcal{V}$. By definition, the ideal classifier maps any valid input signal to the correct label in the absence of noise. 

\begin{figure}
\begin{center}
	\includegraphics[scale=0.43]{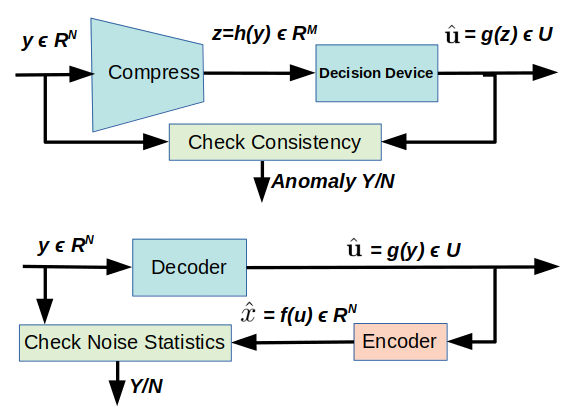}
	\end{center}
	\caption{AI classifier using a information compression process and its analogy with a communication decoder}
	\label{Fig2}
\end{figure}

Fig. \ref{Fig2} shows an abstract model of a classifier that is constrained to make final classification decisions based on only  a compressed version $\mbz$ of $\mby$. Specifically, we assume that there exists a compression function $\mbh: \mathbb{R}^N \rightarrow \mathbb{R}^M$, 
where $M \ll N$ such that the classifier output $q(\mby): \mathbb{R}^N \rightarrow \mathcal{X}$ can be written as $q(\mby) = g(\mbh(\mby))$, where $g: \mathbb{R}^M \rightarrow \mathcal{X}$ is a decision function. We define the ``compressed codeword 
sets'' as $\mathcal{Z}_i \doteq \{ \mbz \in \mathbb{R}^M: \exists \mbu \in \mathcal{U}_i, \mbv \in \mathcal{V},~\mbh(\mbf(\mbu,\mbv))=\mbz \}$. We will assume that the sets $\mathcal{Z}_i$ are disjoint so that 
the compression map $\mbh(\mby)$ preserves information in $\mby$ about the label $i$.

We will show that a classifier constrained to use only $h(\mby)$ for decoding, even if designed optimally, can retain its robustness to random 
noise $\mbw$, but is necessarily vulnerable to adversarial attacks that are significantly smaller in magnitude. By contrast, uncompressed 
classifiers can be robust to both random and worse-case noise. In other words, we show that adversarial fragility can be explained as an artifact of compression or dimension reduction in decoders.

Our method for detecting adversarial attacks is based on the idea of at least partially ``decompressing'' the output of the classifier and 
checking it for consistency against the raw observations $\mby$. Specifically, suppose the classifier outputs label $j$ for input signal $\mby$.
Define $\mbc_j(\mby)$ as:
\begin{align}
\mbc_j(\mby) &\doteq \arg \min_{\mbc \in \mathcal{X}_j} \norm{\mby - \mbc} \label{eq:cidef}
\end{align}
If we observe that $\norm{\mby - \mbc_j(\mby)}$ is abnormally large, this means that the observed signal $\mby$ is far from any valid 
codeword $\mbf(\mbu_j,\mbv)$ with label $j$ and we conclude that label $j$ is inconsistent with observations $\mby$. This, however, requires 
a feasible method for calculating $\mbc_j(\mby)$ for a label $j$ and signal $\mby$. This is basically a {\it denoising} operation that outputs a 
noise-free codeword $\mbc_j(\mby)$ given a label $j$ and noisy observation $\mby$. Generative models are capable of performing such a 
{\it denoising} operation. We do not discuss the design of such models in this paper; instead we will show that under mild assumptions on 
the encoding function $\mbf(\cdot)$, we can provide theoretical guarantees on a detector {\it assuming a well-functioning generative model}.

\section{Theoretical Analysis}
In this section, we consider a signal $\mbx \in \mathbb{R}^{N} $,  which can be a noisy version of a codeword.  Without loss of generality, we assume that an ideal classifier will classify $\mbx$ to label $1$, and assume that the closest codeword to $\mbx$ is $\mbc_{1}=\mbf(\mbu_1,\mbv_1)$ for $\mbu_{1}$ and  a certain $\mbv_{1}$.
For any $i\neq 1$, we also define $\mbc_{i}$ as the codeword with label $i$ that is closest to $\mbx$: $\mbc_i \doteq \arg \min_{\mbc \in \mathcal{X}_i} \norm{\mbx-\mbc}$.
We define the sets $\mathcal{S}_1$ and $\mathcal{S}_i$ as the spheres of size $r$ around $\mbc_{1}$ and $\mbc_i$ respectively, namely $\mathcal{S}_1 \doteq \{ \mathbf{b} \in \mathbb{R}^N: \norm{\mathbf{b}-\mbc_{1}} < r \}$ and
$\mathcal{S}_i \doteq \{ \mathbf{b} \in \mathbb{R}^N: \norm{\mathbf{b}-\mbc_i} < r \}$.  We assume that $\mbx \in \mathcal{S}_{1}$.  For simplicity of analysis,  we assume that, for a vector $\mby \in \mathbb{R}^{N}$, the classifier $q(\mathbf{y})$ outputs label $i$ if and only if $h(\mathbf{y})=h(\mathbf{b})$ for  a certain  $\mathbf{b}\in \mathcal{S}_{i}$. 

We consider the problem of finding the smallest targeted
perturbation $\mbw$ in magnitude which fools the decoder $q(\mbx+\mbw)$ into outputting label $i \neq 1$. Formally, for any $\mbx \in \mathbb{R}^N$, we define the
minimum perturbation size $d_i(\mbx)$ needed for target label $i$ as:
\begin{align}
d_i(\mbx) \doteq \min_{\mbw\in\mathbb{R}^N, \mbt \in \mathcal{S}_i}~\norm{\mbw},~\text{s.t.}~\mbh(\mbx+\mbw)=\mbh(\mbt).
\label{eq:optimizationknownparameter1}
\end{align}

Let us define a quantity $d(\mbx, \mbt)$, which we term as ``effective distance between $\mbx$ and $\mbt$ with respect to function
$\mbh(\cdot)$'' as $d(\mbx, \mbt) = \min_{\mbw \in \mathbb{R}^{N},~\mbh(\mbx+\mbw)=\mbh(\mbt)}  \|\mbw\|$. Then for any vector
$\mbx \in \mathbb{R}^N$, we can use (\ref{eq:optimizationknownparameter1}) to upper bound the smallest required perturbation
size $d_i(\mbx) \leq \min_{\mbt \in \mathcal{S}_{i}} d(\mbx, \mbt)$.

For an $\epsilon>0$ and $l>0$, we say a classifier has $(\epsilon, l)$-robustness at signal $\mbx$, if $\mathbb{P}( g(\mbh(\mbx+\mbw))=g(\mbh(\mbx)))  \geq 1-\epsilon$, where $\mbw \in \mathbb{R}^{N}$ is randomly sampled uniformly on a sphere\footnote{Defined
for some given norm, which we will take to be $\ell_2$ norm throughout this paper.} of radius $l$, and $\mathbb{P}$ means probability. In the following, we will show that for a small
$\epsilon$, compressed classifiers can still have $(\epsilon, l)$-robustness for $l \gg d_i(\mbx)$, namely the classifier can tolerate large random
perturbations while being vulnerable to much smaller adversarial attacks.

\subsection{Classifiers with Linear Compression Functions}

We first consider the special case where the compression function $\mbh(\cdot)$ is linear, namely $\mbh(\mby)=A \mby$ with $A \in
\mathbb{R}^{M \times N},~M \ll N$. While this may not be a reasonable model for practical AI systems, analysis of linear compression
functions will yield analytical insights that generalize to nonlinear $\mbh(\cdot)$ as we show later.

\begin{theorem}
Let $\mby \in \mathbb{R}^{N}$ be the input to a classifier, which makes decisions based on the compression function $\mbz= \mbh(\mby)= A \mby$, where the elements of $A \in \mathbb{R}^{M \times N}$  ($M\ll N$)
are i.i.d. following the standard Gaussian distribution $\mathcal{N}(0,1)$. Let $B_i=\{\mbz~:~ \mbz=A\mathbf{b}, \mathbf{b}\in \mathcal{S}_i \}$ be the
compressed image of $\mathcal{S}_i$. Then the following statements hold for arbitrary $\epsilon>0$, $i\neq 1$, and a big enough $M$.\\
1) With high probability (over the distribution of $A$), an attacker can design a targeted adversarial attack $\mbw$ with
$\|\mbw\|_2 \leq \sqrt{1+\epsilon}\sqrt{\frac{M}{N}} \|\mbc_{i}-\mbx\|_{2} -r$ such that the classifier is fooled into classifying the signal $\mby=\mbx+\mbw$ into label $i$. Moreover, with high probability (over the distribution of $A$), an attacker can design an (untargeted) adversarial perturbation $\mbw$ with  $\|\mbw\|_2 \leq r- \sqrt{1-\epsilon} \sqrt{\frac{M}{N}} \|\mbx-\mbc_1\|$ such that the classifier will not classify $\mby=\mbx+\mbw$ into label $1$.\\
2) Suppose that $\mbw$ is randomly uniformly sampled from a sphere of radius $l$ in $\mathbb{R}^N$. With high probability (over the distribution of $A$ and $\mbw$), if $l<\sqrt{\frac{1-\epsilon}{1+\epsilon}}  \|\mbc_{i}-\mbx\|_{2}-\frac{r}{\sqrt{1+\epsilon} \sqrt{\frac{M}{N}}}$, the classifier will not classify $\mby=\mbx+\mbw$ into label $i$.
Moreover, with high probability (over the distribution of $A$ and $\mbw$),  if $l<(1-\epsilon)\sqrt{\frac{N}{M}} \sqrt{r^2-\frac{M}{N} \|\mbx-\mbc_1\|^2 }$, the classifier still classifies the $\mby=\mbx+\mbw$ into label $1$ correctly.\\
3) Let $\mbw$ represent a successful adversarial perturbation i.e. the classifier outputs target label $i\neq 1$ for the input $\mby=\mbx+\mbw$. Then as long as $\|\mbw\|_{2} < \min_{\mbc_i \in \mathcal{X}_i}\|\mbc_{i} -\mbx\|-r$, our adversarial detection approach will be able to detect the attack.
\end{theorem}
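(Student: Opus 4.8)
The plan is to move everything onto the row space $R$ of $A$, which is almost surely an $M$-dimensional subspace, rotationally uniform and independent of any fixed vectors, and to let $\Pi_R$ denote the orthogonal projection onto it. The single analytic tool needed is the Johnson--Lindenstrauss-type concentration: for a fixed $\mbv\in\mbR^N$ and any $\epsilon>0$, with probability at least $1-2e^{-cM\epsilon^2}$ over $A$ one has $(1-\epsilon)\tfrac{M}{N}\norm{\mbv}^2\le\norm{\Pi_R\mbv}^2\le(1+\epsilon)\tfrac{M}{N}\norm{\mbv}^2$; additionally, if $\mbw$ is uniform on a sphere of radius $l$, then $|\langle\mbw,\mbv\rangle|$ is of order $l\norm{\mbv}/\sqrt{N}$ with high probability, hence negligible for large $N$. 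Because only finitely many vectors enter the argument (the $\mbc_i-\mbx$ for the labels $i$ in play, and $\mbc_1-\mbx$), a union bound keeps the total failure probability $o(1)$ once $M$ (hence $N$) is large enough; this is exactly what ``big enough $M$'' and ``with high probability'' refer to.

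The structural fact underlying Parts 1 and 2 is that the feasibility condition in (\ref{eq:optimizationknownparameter1}), and more generally the event that $q(\mbx+\mbw)$ outputs label $j$, depends on $\mbw$ only through $\Pi_R\mbw$: $\mbh(\mbx+\mbw)=\mbh(\mbt)$ for some $\mbt\in\mathcal{S}_j$ iff $\Pi_R\mbw$ lies in the open radius-$r$ ball of $R$ centered at $\Pi_R(\mbc_j-\mbx)$, since the image of $\mathcal{S}_j-\mbx$ under $\Pi_R$ is exactly that ball. For the targeted attack in Part 1 I would take $\mbw$ along $\Pi_R(\mbc_i-\mbx)$ with $\norm{\mbw}$ just above $\norm{\Pi_R(\mbc_i-\mbx)}-r$, then apply the upper concentration bound to get $\norm{\mbw}\le\sqrt{1+\epsilon}\sqrt{M/N}\,\norm{\mbc_i-\mbx}-r$; the untargeted attack and the ``not label $i$'' half of Part 2 are the same geometry in reverse, comparing $\norm{\Pi_R\mbw}$ against the radius $r$ and against the center norm $\norm{\Pi_R(\mbc_1-\mbx)}$ (resp. $\norm{\Pi_R(\mbc_i-\mbx)}$), invoking the one-sided concentration bound with the appropriate sign of $\epsilon$ for the center and, in the random case, the analogous bound for $\norm{\Pi_R\mbw}\le\sqrt{1+\epsilon}\sqrt{M/N}\,l$. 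The ``still label $1$'' half of Part 2 is the delicate one: here one needs $\norm{\Pi_R(\mbw-(\mbc_1-\mbx))}<r$, and the cleanest route is to use that $\mbw$ is (whp) nearly orthogonal to the fixed vector $\mbc_1-\mbx$, so $\norm{\mbw-(\mbc_1-\mbx)}^2\approx l^2+\norm{\mbc_1-\mbx}^2$, then apply the projection bound and match to $r^2$, absorbing all concentration slack into the $(1-\epsilon)$ factor to obtain the stated bound on $l$.

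Part 3 is deterministic and short. Since the classifier outputs label $i$ on $\mby=\mbx+\mbw$, the detector of (\ref{eq:cidef}) forms $\mbc_i(\mby)=\arg\min_{\mbc\in\mathcal{X}_i}\norm{\mby-\mbc}$ and flags the input when $\norm{\mby-\mbc_i(\mby)}$ exceeds the legitimate noise level $r$ (a genuine noisy signal of true label $i$ lies within $r$ of $\mathcal{X}_i$ by the definition of $\mathcal{S}_i$). One triangle inequality gives $\norm{\mby-\mbc_i(\mby)}=\mathrm{dist}(\mby,\mathcal{X}_i)\ge\mathrm{dist}(\mbx,\mathcal{X}_i)-\norm{\mbw}=\min_{\mbc_i\in\mathcal{X}_i}\norm{\mbc_i-\mbx}-\norm{\mbw}$, which is strictly larger than $r$ exactly when $\norm{\mbw}<\min_{\mbc_i\in\mathcal{X}_i}\norm{\mbc_i-\mbx}-r$, so the attack is detected.

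The main obstacle I expect is pinning the constants in the ``still label $1$'' bound of Part 2, since it chains two independent sources of randomness ($A$ and $\mbw$) through the near-orthogonality estimate; everything else is either a single clean application of the projection concentration inequality or elementary geometry (Part 3). A secondary point worth flagging is that the perturbation bounds are only meaningful when their right-hand sides are positive, i.e. when $\norm{\mbc_i-\mbx}$ is large relative to $\sqrt{N/M}\,r$ — precisely the regime in which compression opens a gap between adversarial and random robustness — and this should be carried as a standing assumption.
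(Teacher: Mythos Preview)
Your proposal is correct and follows essentially the same route as the paper: reduce everything to the orthogonal projection $P=A^{\dagger}A$ onto the row space of $A$, identify the classification event ``label $j$'' with $\Pi_R\mbw$ landing in the radius-$r$ ball in $R$ centered at $\Pi_R(\mbc_j-\mbx)$, and then apply the standard concentration bounds for $\|P\mbv\|$ to the finitely many fixed vectors involved; Part~3 is the same one-line triangle inequality. Your two-step handling of the ``still label $1$'' half of Part~2 (near-orthogonality of $\mbw$ and $\mbc_1-\mbx$, then the projection bound on the sum) is exactly a clean execution of what the paper refers to only as ``a large deviation analysis'' and yields the same constant.
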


\begin{proof}
1) We first look at the targeted attack case. For linear decision statistics, $d(\mbx, \mbt) = \min_{\mbw \in \mathbb{R}^{N}, A(\mbx+\mbw)=A(\mbt)}  \|\mbw\| $.
Solving {{this optimization problem}},
we know the optimal $\mbw$ is given by
$\mbw=A^{\dagger}  A(\mbt-\mbx),$
where $A^{\dagger}$ is the Moore-Penrose inverse of $A$.
We can see that $\mbw$ is nothing but the projection of $(\mbt-\mbx)$ onto the row space of $A$.  We denote the projection matrix as $P=A^{\dagger}  A$.
Then the smallest magnitude of an effective adversarial perturbation is upper bounded by
$$ \min_{\mbt \in \mathcal{S}_{i}} d(\mbx, \mbt)
=   \min_{\mbt \in \mathcal{S}_{i}} \|   A^{\dagger}  A(\mbt-\mbx) \|
= \min_{\mbt \in \mathcal{S}_{i}} \|   P(\mbt-\mbx) \|     .$$
For $\mbt \in \mathcal{S}_{i}$, we have
$\|   P(\mbt-\mbx) \|=  \|   P (\mbc_{i}-\mbx)+ P (\mbt-\mbc_{i}) \| \geq   \|   P (\mbc_{i}-\mbx)\|-    \|P (\mbt-\mbc_{i}) \|$. One can show that, when  $\mathcal{S}_{i}=\{\mbt~|~\|\mbt-\mbc_{i}\|\leq r\}$, we can always achieve the equality, namely $ \min_{\mbt \in \mathcal{S}_{i}} \|   P(\mbt-\mbx) \| =  \|   P (\mbc_{i}-\mbx)\|-r$.

Now we evaluate $\|P (\mbc_{i}-\mbx)\|$. Suppose that $A$'s elements are i.i.d., and follow the standard zero-mean Gaussian distribution $\mathcal{N} (0,1)$, then the random projection $P$ is uniformly sampled from the Grassmannian $Gr(M, \mathbb{R}^{N})$.  We can see that the distribution of  $\|   P (\mbc_{i}-\mbx)\|$ is the same as the distribution of the magnitude of the first $M$ elements of $\|   (\mbc_{i}-\mbx)\| \mathbf{o}/\|\mathbf{o}\| $, where $\mathbf{o} \in \mathbb{R}^N$ is a vector with its elements being i.i.d.  following the standard Gaussian distribution  $\mathcal{N}(0,1)$.  From the concentration of measure, for any positive $\epsilon<1$,
\begin{align*}
& \mathbb{P} \left(\|P(\mbc_{i}-\mbx)\|
\leq \sqrt{1-\epsilon} \|\mbc_{i}-\mbx\| \sqrt{\frac{M}{N}}\right) \leq e^{- \frac{M\epsilon^2}{4}},\\
& \mathbb{P} \left(\|P(\mbc_{i}-\mbx)\|
\geq \sqrt{1+\epsilon} \|\mbc_{i}-\mbx\| \sqrt{\frac{M}{N}}\right) \leq e^{- \frac{M\epsilon^2}{12}}  .
\end{align*}
Then when $M$ is big enough, $ \min_{\mbt \in \mathcal{S}_{i}} \|   P(\mbt-\mbx) \| \leq  \sqrt{1+\epsilon}\sqrt{\frac{M}{N}} \|\mbc_{i}-\mbx\| -r $ with high probability, for arbitrary $\epsilon>0$.

Now let us look at what perturbation $\mbw$ we need such that $A(\mbx+\mbw)$ is not in $B_1$. One can show that $A(\mbx+\mbw)$ is outside $B_1$ if and only if, $\|P(\mbx-\mbc_1+\mbw)\| >r$.  Then by the triangular inequality, the attacker can take an attack $\mbw$ with $\|\mbw\| > r- \|  P(\mbx-\mbc_1) \|$, which is no bigger than
$r- \sqrt{1-\epsilon} \sqrt{\frac{M}{N}} \|\mbx-\mbc_1\|_2$ with high probability, for arbitrary $\epsilon>0$ and big enough $M$.

2) If and only if $\mbh(\mbx +\mbw )\neq \mbh(\mbt)$, $\forall  \mbt \in \mathcal{S}_{i} $,  $\mbw$ will not fool the classifier into label $i$.
If $\mbh(\mby)=A \mby$, ``$\mbh(\mbx +\mbw )\neq \mbh(\mbt)$, $\forall  \mbt \in \mathcal{S}_{i} $'' is equivalent to
``$ \|A(\mbx+\mbw-\mbt)\| \neq 0  $, $\forall  \mbt \in \mathcal{S}_{i} $'', which is in turn equivalent to
``$ \|P(\mbx+\mbw-\mbt)\| \neq 0  $, $\forall  \mbt \in \mathcal{S}_{i} $'', where $P$ is the projection onto the row space of $A$. Assuming that $\mbw$ is uniformly randomly sampled from a sphere in $\mathbb{R}^{N}$ of radius $l<\sqrt{\frac{1-\epsilon}{1+\epsilon}}  \|\mbc_{i}-\mbx\|-\frac{r}{\sqrt{1+\epsilon} \sqrt{\frac{M}{N}}}$, then
\begin{align*}
\|P(\mbx+\mbw-\mbt )\|
& =\|P(\mathbf{c}_{i}-\mbx)+P(\mbt-\mbc_{i}) -P\mbw \|\\
&\geq \| P(\mathbf{c}_{i}-\mbx)   \|-\| P(\mbt-\mathbf{c}_{i})  \|-\|P\mbw\|.
\end{align*}

From the concentration inequality,
$\mathbb{P} \left(\|P\mbw\| \geq \sqrt{1+\epsilon} \|\mbw\| \sqrt{\frac{M}{N}}\right)\leq e^{- \frac{M(\epsilon^2/2-\epsilon^3/3 )   }{2}}$.
Thus if $M$ is big enough, with high probability,
$\|P(\mbx+\mbw-\mbt )\| \geq \sqrt{1-\epsilon} \|\mbc_{i}-\mbx\|_{2} \sqrt{\frac{M}{N}} -r
 -\sqrt{1+\epsilon} \|\mbw\| \sqrt{\frac{M}{N}}$. If $\|\mbw\| = l$, $\|P(\mbx+\mbw-\mbt )\|>0$.


Now let us look at what magnitude we need for a random perturbation $\mbw$ such that $A(\mbx+\mbw)$ is in $B_1$ with high probability. We know $A(\mbx+\mbw)$ is in $B_1$ if and only if, $\|P(\mbx-\mbc_1+\mbw)\| <r$.  Through a large deviation analysis, one can show that, for any $\delta>0$ and big enough $M$,  $\|P(\mbx-\mbc_1+\mbw)\|$ is smaller than $(1+\delta)\sqrt{\frac{M}{N} \|\mbx-\mbc_1\|^2+ \frac{M}{N} l^2}$  and bigger than $(1-\delta)\sqrt{\frac{M}{N} \|\mbx-\mbc_i\|^2+ \frac{M}{N} l^2}$ with high probability. Thus, for an arbitrary $\epsilon>0$, if $l<(1-\epsilon)\sqrt{\frac{N}{M}} \sqrt{r^2-\frac{M}{N} \|\mbx-\mbc_1\|^2 }$, $\|P(\mbx-\mbc_1+\mbw)\|<r$ with high probability, implying the AI classifier still classifies the $\mby=\mbx+\mbw$ into Class $1$ correctly.


3) Suppose that an AI classifier classifies the input signal $\mby=\mbx+\mbw$ into label $i$.  We propose to check whether $\mby$ belongs to $\mathcal{S}_{i}$. In our model,  the signal  $\mby$ belongs to $\mathcal{S}_{i}$ only if $ \min_{\mbc_{i} \in \mathcal{X}_{i}   }\|\mby-\mbc_{i}\|\leq r$. Let us take any codeword $\mbc_{i} \in \mathcal{X}_{i}$.   We show that when $\|\mbw\| < \|\mbc_{i} -\mbx\|-r$, we can always detect the adversarial attack if the AI classifier misclassifies $\mby$ to that codeword corresponding to label $i$. In fact, $\|\mby-\mbc_{i}\|
=\|(\mbx  +\mbw)-\mbc_{i}  \|_{2}$, which is no smaller than $\|  \mbc_{i}- \mbx  \| -\|\mbw\|\geq \|  \mbc_{i}- \mbx  \|-(\|\mbc_{i} -\mbx\|-r)
>r$.

We note that $ \|\mbw\| \leq  \min_{\mbc_{i} \in \mathcal{X}_i   }\|\mbc_{i}-\mbx\|-r$ means $\|\mbw\|_{2} < \|\mbc_{i} -\mbx\|-r$ for every codeword $\mbc_{i}$, thus implying that the adversary attack detection technique can detect that $\mby$ is at more than distance $r$ from every codeword from $\mathcal{X}_{i}$.

\end{proof}

\subsection{Nonlinear Decision Statistics in AI Classifiers}
In this subsection, we show that an AI classifier using nonlinear compressed decision statistics $\mbh(\mby) \in \mathbb{R}^{M}$is significantly more vulnerable to adversarial attacks than to random perturbations. We will quantify the gap between how much a random perturbation and a well-designed adversarial attack affect  $\mbh(\mby)$, the full proof of which is in \cite{full_version_paper_2019}.

\begin{theorem}\label{Thm:Nonlinear}
Let us assume that the nonlinear function $\mbh(\mbx):\mbR^N\to \mbR^M$ is differentiable at $\mbx$.  For $\epsilon>0$, we define $\alpha (\epsilon)=\max_{\|\mbw\| \leq  \epsilon} (\|\mbh(\mbx+\mbw)-  \mbh(\mbx)\| )$, and $\beta (\mathbf{o}, \epsilon)=  \|\mbh(\mbx+{{\epsilon}}\mathbf{o})-  \mbh(\mbx)\|$, where $\mathbf{o}$ is  uniformly randomly sampled from a unit sphere.  Then $\lim_{\epsilon \rightarrow 0}   \frac{ \alpha (\epsilon) }{ E_{\mathbf{o}} \{ \beta (\mathbf{o}, \epsilon) \} }  \geq   \sqrt{\frac{N}{M}} $, where $E_{\mathbf{o}} $ means expectation over the distribution of $\mathbf{o}$. If we  assume that the entries of the Jacobian matrix $\nabla \mbh(\mbx)\in\mbR^{M\times N}$ are i.i.d. distributed following the standard Gaussian distribution $\mcN(0,1)$, then, when $N$ is big enough, with high probability, $\lim_{\epsilon \rightarrow 0}   \frac{ \alpha (\epsilon) }{ E_{\mathbf{o}} \{ \beta (\mathbf{o}, \epsilon) \} }  \geq  (1-\delta) \sqrt{\frac{N+M}{M}} $ for any $\delta>0$, .
\end{theorem}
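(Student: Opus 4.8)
The plan is to push the $\epsilon\to0$ limit through to a statement about the Jacobian $J\doteq\nabla\mbh(\mbx)\in\mbR^{M\times N}$ and then prove the two claimed bounds separately — the first a deterministic inequality, the second a random-matrix estimate. Write $\mbh(\mbx+\mbw)=\mbh(\mbx)+J\mbw+R(\mbw)$ with $\|R(\mbw)\|=o(\|\mbw\|)$ as $\mbw\to\mathbf{0}$. First I would show $\alpha(\epsilon)/\epsilon\to\|J\|_2$ (the largest singular value of $J$): the upper bound $\alpha(\epsilon)\le\epsilon\|J\|_2+\sup_{\|\mbw\|\le\epsilon}\|R(\mbw)\|$ is just the triangle inequality, and the matching lower bound follows by restricting $\mbw$ to the direction of a top right singular vector of $J$. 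For the random perturbation, $\beta(\mathbf{o},\epsilon)/\epsilon=\|J\mathbf{o}+R(\epsilon\mathbf{o})/\epsilon\|$, so $\big|\beta(\mathbf{o},\epsilon)/\epsilon-\|J\mathbf{o}\|\big|\le\|R(\epsilon\mathbf{o})\|/\epsilon$; the right-hand side tends to $0$ for each fixed $\mathbf{o}$ and, because differentiability makes the $o(\cdot)$ uniform on small balls, is bounded by $1$ for all $\mathbf{o}$ once $\epsilon$ is small, so bounded convergence gives $E_{\mathbf{o}}\{\beta(\mathbf{o},\epsilon)\}/\epsilon\to E_{\mathbf{o}}\{\|J\mathbf{o}\|\}$. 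Hence (assuming $J\neq\mathbf{0}$, so both limits are positive) the limit exists and
\begin{equation*}
\lim_{\epsilon\to0}\frac{\alpha(\epsilon)}{E_{\mathbf{o}}\{\beta(\mathbf{o},\epsilon)\}}=\frac{\|J\|_2}{E_{\mathbf{o}}\{\|J\mathbf{o}\|\}},
\end{equation*}
and it remains only to lower-bound this ratio.

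For the first claim, since $\mathbf{o}$ is uniform on the unit sphere of $\mbR^N$ we have $E_{\mathbf{o}}\{\mathbf{o}\mathbf{o}^{T}\}=\frac1N I_N$, so $E_{\mathbf{o}}\{\|J\mathbf{o}\|^2\}=\frac1N\mathrm{tr}(J^{T}J)=\|J\|_F^2/N$, and Jensen's inequality (concavity of the square root) yields $E_{\mathbf{o}}\{\|J\mathbf{o}\|\}\le\|J\|_F/\sqrt N$. Since $J$ has at most $\min(M,N)\le M$ nonzero singular values, $\|J\|_F^2=\sum_k\sigma_k^2\le M\sigma_1^2=M\|J\|_2^2$. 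Combining the two, $\|J\|_2\big/E_{\mathbf{o}}\{\|J\mathbf{o}\|\}\ge\sqrt N\,\|J\|_2/\|J\|_F\ge\sqrt{N/M}$, which is exactly claim~1.

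For the second claim I keep $E_{\mathbf{o}}\{\|J\mathbf{o}\|\}\le\|J\|_F/\sqrt N$ but replace the two crude bounds above by sharp high-probability estimates valid for a Gaussian $J$. First, $\|J\|_F^2=\sum_{k,j}J_{kj}^2\sim\chi^2_{MN}$ concentrates, so $\|J\|_F^2\le(1+\eta)MN$, i.e.\ $E_{\mathbf{o}}\{\|J\mathbf{o}\|\}\le\sqrt{(1+\eta)M}$, with probability at least $1-e^{-c\eta^2MN}$. Second, the operator norm of an $M\times N$ matrix with i.i.d.\ $\mcN(0,1)$ entries satisfies $\mbE\|J\|_2=(1-o(1))(\sqrt M+\sqrt N)$ (Bai--Yin, or Gordon's comparison for the mean), and $\|J\|_2$ is a $1$-Lipschitz function of the entries, so Gaussian concentration gives $\|J\|_2\ge(1-\eta)(\sqrt M+\sqrt N)$ with probability at least $1-e^{-c'\eta^2(M+N)}$. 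On the intersection of these events,
\begin{equation*}
\frac{\|J\|_2}{E_{\mathbf{o}}\{\|J\mathbf{o}\|\}}\ \ge\ \frac{(1-\eta)(\sqrt M+\sqrt N)}{\sqrt{(1+\eta)M}}\ =\ \frac{1-\eta}{\sqrt{1+\eta}}\Bigl(1+\sqrt{\tfrac NM}\Bigr)\ \ge\ \frac{1-\eta}{\sqrt{1+\eta}}\sqrt{\tfrac{N+M}{M}},
\end{equation*}
where the last step uses $1+t\ge\sqrt{1+t^2}$ for $t\ge0$. Choosing $\eta$ small enough that $(1-\eta)/\sqrt{1+\eta}\ge1-\delta$, and $N$ large enough that both failure probabilities fall below any prescribed level, yields claim~2.

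The routine ingredients are the Taylor expansion, Jensen's inequality, and the $\chi^2$ concentration of $\|J\|_F^2$. The one genuinely non-elementary step is the high-probability lower bound $\|J\|_2\gtrsim\sqrt M+\sqrt N$ for the Gaussian Jacobian: the upper bound $\mbE\|J\|_2\le\sqrt M+\sqrt N$ is classical, but matching it from below requires either the Bai--Yin limit theorem or a Slepian/Gordon minimax argument, combined with Lipschitz concentration. A secondary technical point is the uniform-in-$\mathbf{o}$ control of the remainder $R(\epsilon\mathbf{o})$ needed to move the limit inside $E_{\mathbf{o}}$ — this is precisely where full differentiability of $\mbh$ at $\mbx$, rather than mere existence of directional derivatives, enters.
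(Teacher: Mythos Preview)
The paper itself does not contain a proof of this theorem; the text immediately preceding the statement says only that ``the full proof of which is in \cite{full_version_paper_2019}'', so there is no in-paper argument to compare your proposal against.

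That said, your proposal is correct and is exactly the natural line one would expect the full version to take: linearize via the Jacobian $J=\nabla\mbh(\mbx)$, reduce the $\epsilon\to0$ limit to the purely matrix-theoretic ratio $\|J\|_2\big/E_{\mathbf{o}}\{\|J\mathbf{o}\|\}$, bound the denominator by Jensen through $E_{\mathbf{o}}\{\|J\mathbf{o}\|\}\le\|J\|_F/\sqrt N$, and then control the numerator either deterministically via $\|J\|_F\le\sqrt M\,\|J\|_2$ (claim~1) or via the Gaussian operator-norm asymptotic $\|J\|_2\approx\sqrt M+\sqrt N$ (claim~2). Your treatment of the remainder $R(\epsilon\mathbf{o})$ is also correct: Fr\'echet differentiability at $\mbx$ gives precisely the uniformity over the unit sphere needed to pass the limit through $E_{\mathbf{o}}$, so bounded convergence is legitimate. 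The one step you flag as non-elementary --- the high-probability \emph{lower} bound $\|J\|_2\ge(1-\eta)(\sqrt M+\sqrt N)$ --- is indeed the only place where real random-matrix input is needed; Bai--Yin (or a Gordon comparison for the mean plus Lipschitz concentration) supplies it, and the theorem's phrasing ``when $N$ is big enough'' is consistent with invoking an asymptotic result of that type.
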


\section{Experimental Results}
\begin{figure}
\begin{center}
	\includegraphics[scale=0.2]{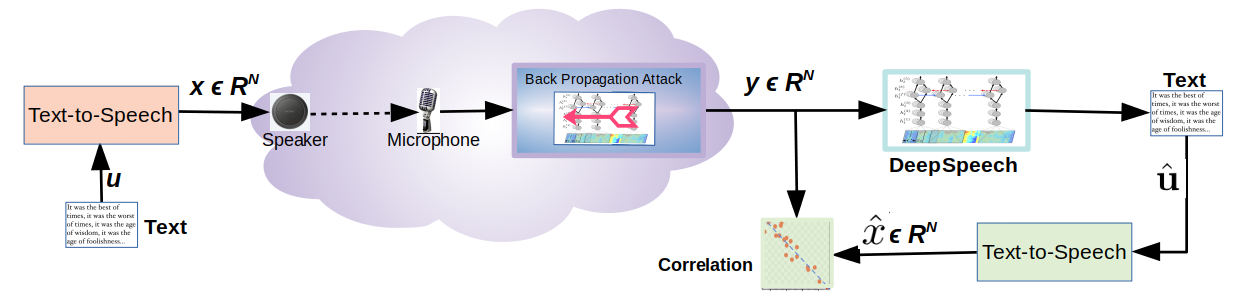}
	\end{center}
	\caption{System Modules of Using Correlation Coefficients to Detect Adversarial Attacks}
	\label{Fig3}
\end{figure}

We performed a series of experiments\footnote{https://github.com/Hui-Xie/AdversarialDefense} to test and illustrate our proposed defense for a popular voice recognition AI classifier DeepSpeech\footnote{https://github.com/mozilla/DeepSpeech}. The experimental setup is illustrated in Fig. \ref{Fig3}; a visual comparison with the abstract model in Fig.\ref{Fig1} shows how the various functional blocks are implemented in our experiment.


The experiment consisted of choosing sentences randomly from the classic 19-th century novel ``A Tale of Two Cities.''  A Linux text-to-speech (T2S) software, Pico2wave, converted a chosen sentence e.g. $\mbu_1$ into a female voice wave file. The use of a T2S system for generating the source audio signal (instead of human-spoken audio) effectively allows us to hold the all ``irrelevant'' variables $\mbv$ constant, and thus renders the signal synthesis block in Fig. \ref{Fig1} as a deterministic function of just the input label $\mbu_1$. 


\begin{figure}
	\includegraphics[scale=0.32]{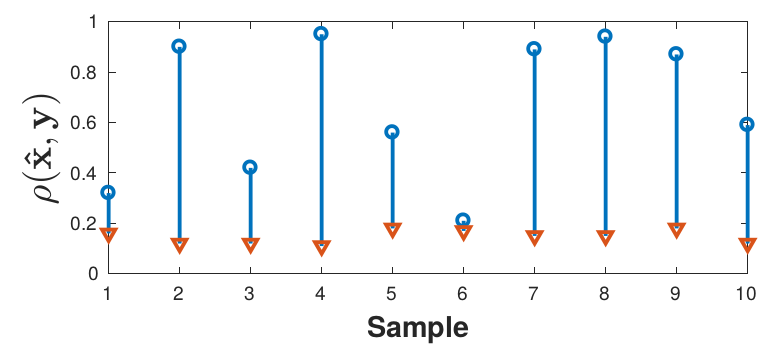}
	\caption{The change of cross correlation coefficients $\rho(\hat{x},y)$.  Blue circles indicate $\rho$ between input signals $y_1$ without adversarial attack  and their corresponding reconstructed signals $\hat{x}$ from decoded labels,  and red triangles indicate $\rho$ between input signals $y_2$ with adversarial attacks and its coresponding reconstructed signals $\hat{x}$. Low ``blue circles'' mean DeepSpeech runs into recognition failure in several error characters, even if no adversarial attacks are present. }
	\label{Fig4}
\end{figure}

Let $x[n]$ denote the samples of this source audio signal. This audio signal is played over a PC speaker and recorded by a USB microphone on another PC. Let $y_1[n]$ denote the samples of this recorded wave file. The audio playback and recording was performed in a quiet room with no audible echoes or distortions, so this ``channel'' can be approximately modeled as a simple AWGN channel: $y_1[n]= \alpha x[n] +  w_1[n]$, where $\alpha$ is a scalar representing audio signal attenuation and $w_1[n]$ is random background noise. In our experiment, the $\mathrm{SNR} \doteq \frac{\alpha^{2}\norm{x}^2} {\norm{w_1}^2}$ was approximately $28$ dB.

We input $y_1[n]$ into a voice recognition system, specifically, the Mozilla implementation DeepSpeech V0.1.1 based on TensorFlow. The 10 detailed sentences are demonstrated in the table of our full paper \cite{full_version_paper_2019}. We then used Nicholas Carlini's adversarial attack Python script\footnote {https://nicholas.carlini.com/code} with Deep Speech (V0.1.1) through gradient
back-propagation to generate a targeted adversarial audio signal $y_2[n]=y_1[n]+w_2[n]$ where $w_2[n]$ is a small adversarial perturbation that causes the DeepSpeech voice recognition system to predict a completely different
sentence $\mbu_2$. Thus, we have a ``clean'' audio signal $y_1[n]$, and a ``targeted corrupted'' adversarial audio signal $y_2[n]$ that upon playback is effectively indistinguishable from $y_1[n]$, but successfully fools DeepSpeech into outputting a different target sentence. In our experiment, the power of $y_2[n]$ over  the adversarial perturbation $w_2$ was approximately $35$ dB.

We then implemented a version of our proposed defense to detect whether the output of the DeepSpeech is wrong, whether due to noises or adversarial attacks. For this purpose, we fed the decoded text output of the DeepSpeech system into the same T2S software Pico2Wave, to generate a reconstructed female voice wave file,  denoted by $\hat{x}[n]$. We then performed a simple cross-correlation of a portion of the reconstructed signal (representing approximately $10\%$ reconstruction of the original number of samples in $x[n]$) with the input signal $y[n]$ to the DeepSpeech classifier:
$\rho_{max}(\hat{x},y) = \max_m \left| \sum_n \hat{x}[n] y[n-m] \right|$.
If $\rho_{max}$ is smaller than a threshold (0.4), we declare that the speech recognition classification is wrong. The logic behind this test is as follows. When the input signal is $y_1[n]$ i.e. the non-adversarial-perturbed signal, the DeepSpeech successfully outputs the correct label $\hat{\mbu} \equiv \mbu_1$,which results in $\hat{x}[n] \equiv x[n]$. Since $y_1[n]$ is just a noisy version of $x[n]$, it will be highly correlated with $\hat{x}[n]$. On the other hand, for the adversarial-perturbed input $y_2[n]$, the reconstructed signal $\hat{x}[n]$ is completely  different from $x[n]$ and therefore can be expected to be practically uncorrelated with $y_2[n]$.

Fig.\ref{Fig4} shows the cross-correlation $\rho_{max}(\hat{x},y)$ for $10$ sets of recorded signals (a) with and (b) without adversarial perturbations
(red triangles and blue circles respectively in Fig. \ref{Fig4}) . The adversarial perturbations all successfully fool the DeepSpeech AI to output the target text $\mbu_2=$``he travels the fastest who travels alone''. We see that the observed correlations for the adversarial signals are always very small, and are therefore successfully detected by our correlation test. Interestingly, some of the non-adversarial signals yield low correlations as well, but {{this is because the DeepSpeech cannot decode perfectly}} even when there are no adversarial attacks present. 






\bibliographystyle{IEEEtran}
\bibliography{Ref_adversarial}
\end{document}